\newcounter{parentnumber}
\theoremstyle{plain}
\newtheorem{proposition}{Proposition}
\newtheorem {lemma}{Lemma}
\theoremstyle{definition}
\newcommand{\N}{\mathbb{N}}
\newcommand{\R}{\mathbb{R}}
\title{\vspace{-1em}Efficient Public Good Provision Between and Within Groups}
\author{ 
\normalsize Chowdhury Mohammad Sakib Anwar\thanks{BDT, University of Winchester. \Letter: Sakib.Anwar@winchester.ac.uk} \and 
\normalsize  Jorge Bruno\thanks{BPP University. \Letter: jorgebruno@bpp.com} \and 
\normalsize Renaud Foucart\thanks{Department of Economics, Lancaster University. \Letter: r.foucart@lancaster.ac.uk} \and 
\normalsize Sonali SenGupta\thanks{\textit{Corresponding Author.} Department of Economics, Queen's University Belfast. \Letter: s.sengupta@qub.ac.uk}
}
\date{\normalsize \today}
\begin{document} 
\maketitle
\begin{abstract}
We generalize the model of \textcite{gallice2019co} to incorporate a public goods game with groups, position uncertainty, and observational learning. Contributions are simultaneous within groups, but groups play sequentially based on their observation of an incomplete sample of past contributions. We show that full cooperation between and within groups is possible with self-interested players on a fixed horizon. Position uncertainty implies the existence of an equilibrium where groups of players conditionally cooperate in the hope of influencing further groups. Conditional cooperation implies that each group member is pivotal, so that efficient simultaneous provision within groups is an equilibrium.
    \medskip
\begin{flushleft}\textbf{Keywords} : Public Goods, Groups, Position Uncertainty, Voluntary Contributions.\end{flushleft}\par
\begin{flushleft}\textbf{JEL Codes}: C72, D82, H41 \ \end{flushleft}\par
\end{abstract}

\newpage
\section{Introduction}


When players are organized into groups, such as federations of countries, the provision of public goods faces two different obstacles. Firstly, each group has an incentive to free-ride on the contributions of others. Secondly, players lack incentives to cooperate within groups. 

In this paper, we identify fully cooperative equilibria in a model combining simultaneous contributions to a public good within groups, with a finite-horizon game of sequential decisions and imperfect information among groups. Each group is placed exogenously in a sequence and is unaware of their exact position. Within a group, players individually and simultaneously choose their contribution. All players in a given group observe the total contribution of some of their immediate predecessor groups. 

Our first contribution is to show that position uncertainty - as introduced by \textcite{gallice2019co} with individual players - turns the standard game of private provision of a public good within each group into a variant of \textcite{palfrey1984participation}'s discrete public goods game in which full contribution is part of the equilibrium. Every individual cooperates, recognizing that a single defection may trigger defection by all members of subsequent groups. The mixed strategy equilibrium involves an individual probability of forgiveness increasing with the number of members in a group. This result suggests that large simultaneous group decisions enhance individual cooperation after observing a defection.

Our second contribution is to inform policy debates on the need to build a ``grand coalition'' of countries to contribute to the global public good of decreasing CO2 emissions, and in particular the problem of enforcement within the coalition  \parencite{d1983stability,bloch1996sequential,yi1997stable,belleflamme2000stable,battaglini2016participation,harstad2019compliance,kovavc2021simple}.\footnote{Part of this literature \parencite{baye1996divisionalization,eckert2003negotiating,buchholz2014potentially,foucart2018strategic} has also shown how large groups could delegate decision-making to their individual members in order to increase free-riding on other groups. We find that this result does not hold with conditionally cooperative group members.} In contrast with most of the existing literature, we show that the absence of a grand coalition might be a blessing in disguise as we find that simultaneous individual contributions can happen without any kind of enforcement mechanism as long as there are more than two groups.

Our model belongs more generally to the literature on observational learning, in which individuals sequentially choose an action after seeing predecessors’ choices \parencite{banerjee1992simple,ccelen2004observational,hendricks2012observational,garcia2018consumer,guarino2013social}. Within this literature, it contributes to a subset of papers looking at the effect of information -- in our case on previous contributions -- on contributions to a public good \parencite{figuieres2012vanishing,tajika2020contribute}. Finally, it relates to work on position uncertainty looking at cases where a principal designs the sequential release of information \parencite{doval2020sequential,gershkov2009optimal,nishihara1997resolution}.

\section{The Model} \label{sec:setup}
Let $I=\{1,2,...,N\}$ be a set of players and consider a game with $b\leq N$ many groups composed of players from $I$, where the allocation of players to groups and the sequence of groups are random with equal probability. 


The timing of the game is as follows. First,  Nature (a non-strategic player) assigns players to groups and randomly picks the sequence of groups. Second, groups sequentially play based on partial information on the contribution of their predecessors obtained through a simple sampling rule (which we formally describe below). Within each group, players choose simultaneously and independently whether to contribute. Player $i$ in Group $t$ must choose one of two actions $a_{i,t} \in \{C,D\}$: action $a_{i,t}=C$ implies a contribution of  one unit while $a_{i,t}=D$ implies no contribution. The total group contribution goes towards the common fund. The common fund is then redistributed to all players with a rate of return $r$. We adopt the standard notation $G_{-i}= \sum_{j\neq i} \mathbbm{1}\{a_{j,t}=C\}$ to denote the number of players (other than $i$) who contribute. Payoffs $u_i(a_i,G_{-i})$ of player $i$ is as follows:
\[
    u_i(C,G_{-i}) = \frac{r}{N}(G_{-i}+1)-1
\]
\[
    u_i(D,G_{-i})=\frac{r}{N}(G_{-i}).
\]
We make the standard assumption that the rate of return satisfies $1<r<N$, so that for a fixed $G_{-i}$ the direct gain from contributing is lower than the individual cost, $u_i(C,G_{-i}) \leq u_i(D,G_{-i})$, but contributions are nonetheless socially desirable. The focus of this paper is on the effect of Player $i$'s contribution, or lack thereof, on subsequent players, so that $G_{-i}$ is not fixed.

For $t\leq b$, the symbol $A_t = (a_{i , t})$ denotes actions of the players in Group $t$ and $h_t=(A_t)_{t=1}^{t-1}$ denotes a possible history of actions up to Group $t-1$. Let $H_t$ be the random history at period $t$ with realizations $h_t\in \mathcal{H}_t$ and let $\mathcal{H}_1=\{\emptyset \}$.\footnote{We use period and position interchangeably throughout the paper, as they imply the same in our context.} Players play an extensive form game with imperfect information where each is given a sample $\zeta$ containing the actions of their $m\geq 1$ immediate preceding groups. The value of $m$ is common knowledge. That is, players observe a sample $\zeta=(\zeta',\zeta'')$, where $\zeta'$ states the number of groups sampled and $\zeta''$ states the number of contributors in that sample. A player in Group $t<m$ is provided with a smaller sample $\zeta=(t-1,\zeta'')$, so that players in the first group observe $\zeta_1=(0,0)$ and players in groups positioned between 2 to $m$ observe the actions of all their predecessor groups. They can thus infer their exact position in the sequence from the sample they receive. Formally,  letting $g_{t}=\sum_{Q(i)= t}\mathbbm{1}\{a_{i,t}=C\} $ denote the total contributions in Group $t$, players in Group $t$ receive a sample 
$\zeta_t: \mathcal{H}_t \to \mathcal{S}=\N^2$ containing a tuple:

$$ \zeta_t(h_t)= \Big( \underbrace{\min\{m,t-1\}}_{=\zeta'},\quad \underbrace{\sum_{k=\max\{1,m-t\}}^{t-1}g_{k}}_{=\zeta''}\Big).$$

We use \textcite{kreps1982sequential} sequential equilibrium. Player $i$'s strategy is a function $\sigma_i(C|\zeta): \mathcal{S}\to [0,1]$ that specifies the probability of contributing given the sample received. Let $\sigma=\{\sigma_i\}_{i\in I}$ denote a strategy profile and $\mu=\{\mu_i\}_{i\in I}$ a system of beliefs. A pair $(\sigma, \mu)$ represents an \textit{assessment}. Assessment $(\sigma ^*, \mu^*)$ is a {\it sequential equilibrium} if $\sigma^*$ is sequentially rational given $\mu^*$, and $\mu^*$ is consistent given $\sigma^*$. Let $\mathcal{H}= \cup_{t=1}^{n}\mathcal{H}_t$ be the set of all possible histories. Given a profile of play $\sigma$ let $\mu_i$ denote Player $i$'s beliefs about the history of play : $\mu_i(h|\zeta) : \mathcal{H} \times \mathcal{S}\to [0,1]$, with $\sum_{h \in \mathcal{H}} \mu_i(h|\zeta) =1$ for all $\zeta \in \mathcal{S}$.

\section{Results} \label{sec:results}

The main part of the paper focuses on the case where all groups are of the same size $n = \frac{N}{b}$. We extend our results to the asymmetric case in the Online Appendix. We start by proving the existence of a pure strategy cooperative equilibrium when groups observe strictly more than one of their predecessors. We then focus on the case with a single observation and look at equilibria in pure and mixed strategy. We collect all formal proofs in the Appendix.

Assume a sample size of $m>1$, so that all players observe the contributions of strictly more than one group. Given a sample $\zeta = (\zeta',\zeta'')$ with $m \geq \zeta' > 1$, the simple strategy of ``contributing unless a defection is observed" yields a sequential equilibrium provided that $r$ is large enough. For completeness, in what follows we let $\sigma_i^k$ denote a sequence of strategies with $\sigma_i^k(C\mid \zeta)= 1 - (s_k)$ and $\sigma_i^k(D\mid \zeta) = (s_k)$ where $(s_k)$ is any non-trivial real null sequence, and $\mu_i^k$ the induced belief for strategy $\sigma_i^k$ for each $k\in \N$.
\begin{proposition}
[Pure Strategies with $m>1$]
\label{lem:m>1Symmetric}  Consider the profile of play

\begin{align*}
    \sigma_i^*(C\mid \zeta) =
\begin{cases}
1, & \text{if $\zeta$ contains no defections} \\
0, & \text{otherwise.}
\end{cases}
\end{align*}
It follows that $(\sigma^*,\mu^*)$ is a sequential equilibrium provided that 
\begin{align} r \geq \frac{2N}{N-n(m+1)+2}. \end{align} \label{equ:pure}
\end{proposition}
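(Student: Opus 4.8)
The plan is to verify sequential rationality of the proposed grim-trigger profile $\sigma^*$ at every information set, given beliefs $\mu^*$ obtained as the limit of the perturbed strategies $\sigma_i^k$. Since the perturbations $(s_k)$ put vanishing mass on $D$, along the equilibrium path a sample $\zeta=(\zeta',\zeta'')$ with $\zeta''=n\zeta'$ (no defections) is reached with probability approaching one when everyone plays $C$, so consistency pins down $\mu^*$ there to the belief that all predecessors contributed; at a sample containing at least one defection, the belief is that the trigger has been pulled and, crucially, that all \emph{subsequent} groups will also observe a defection and play $D$. The key observation enabling the argument is that with $m>1$ a single defection propagates: if some player in group $t$ deviates to $D$, then each of the next $m$ groups sees that defection in its sample and plays $D$, and since each of those groups now contains $n$ defectors, the chain never dies out. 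I would state this propagation fact as the first step.

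Next I would check the incentive constraints. There are two kinds of information sets. First, \textbf{off-path} sets where $\zeta$ already contains a defection: here $\mu^*$ says all remaining groups play $D$ regardless, so player $i$'s action has no effect on anyone else's behavior, $G_{-i}$ is unaffected by continuation play, and since $u_i(C,G_{-i})<u_i(D,G_{-i})$ for fixed $G_{-i}$, playing $D$ is optimal — matching $\sigma^*$. Second, \textbf{on-path} sets where $\zeta$ contains no defection: player $i$ in group $t$ compares contributing versus deviating to $D$. If $i$ contributes (and, by the profile, so do the other $n-1$ members of the group, since we check one-shot deviations), then all subsequent groups see no defection and contribute, so every one of the remaining $b-t$ groups contributes fully. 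If $i$ instead defects, then by the propagation fact all subsequent groups defect entirely. The relevant comparison is therefore between the payoff streams, and I would compute the difference in $i$'s total payoff between these two scenarios and show it is nonnegative precisely when $r \geq \frac{2N}{N-n(m+1)+2}$.

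The main computation — and the step I expect to be the real obstacle — is getting the worst-case instance of that inequality right. The incentive to deviate is strongest when the continuation is shortest, i.e. when group $t$ is as late as possible while still being pivotal over at least one future group; because a defection by $i$ is only seen by the next $m$ groups, and a later group within the "tail" of length $m$ still triggers the full cascade, the binding position is the one where exactly the right number of groups remain. One must count: by contributing, $i$ gains $\frac{r}{N}$ from each of the remaining contributors in subsequent groups (there are $n$ per group across the groups that would otherwise defect) plus the within-group and own-contribution terms, against the cost structure $\frac{r}{N}\cdot 1 - 1$ per unit contributed along the path. Setting the net gain from never defecting at least that from defecting at the worst position, and simplifying, should yield $r(N - n(m+1) + 2) \geq 2N$. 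I would present the payoff-difference expression as a single aligned display, identify the minimizing position, and then rearrange to obtain the stated bound; finally I would note that under this bound the on-path incentive holds at all earlier positions a fortiori, and that taking $k\to\infty$ makes $(\sigma^*,\mu^*)$ a genuine sequential equilibrium.
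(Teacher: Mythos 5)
Your overall structure matches the paper's: establish that a single defection propagates through all subsequent groups when $m>1$ (so defecting after observing a defection is optimal for any $r$, which is exactly the paper's off-path argument), and then check the on-path incentive to contribute after a clean sample. The gap is in the on-path step. You propose to ``identify the minimizing position'' and verify the incentive constraint at a worst-case position $t$. That cannot work: a player who observes the full sample $(m,mn)$ does not know her position, and her information set pools all positions $t\in\{m+1,\dots,b\}$. At the literal worst case $t=b$ there are no subsequent groups to influence, so contributing is strictly dominated and the position-by-position inequality fails. Sequential rationality here is an \emph{expected-payoff} condition over the player's (uniform) belief about her position -- this is the entire point of position uncertainty in the model.

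Concretely, the paper's computation is: conditional on a clean sample and $t>m$, the position is uniform on $\{m+1,\dots,b\}$ with mean $\tfrac{(m+1)+b}{2}$; defecting yields $\tfrac{r}{N}(tn-1)$ in expectation $\tfrac{r}{N}\cdot\tfrac{N+(m+1)n-2}{2}$, while contributing yields $\tfrac{r}{N}N-1$, and comparing these two expectations gives exactly $r\geq \tfrac{2N}{N-n(m+1)+2}$. Your proposal would either derive a false inequality at late positions or a more stringent bound than the one stated; replacing the worst-case step with the averaging over positions (and noting separately that players in groups $t\leq m$ know they are early and so contribute a fortiori, as the paper does) repairs the argument.
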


The existence condition for this cooperative equilibrium in pure strategy is that the rate of return on investment in public goods is sufficiently large for the expected benefits from sustained cooperation by subsequent groups to be higher than the individual cost of contributing. Whenever groups are of size one, the condition in \eqref{equ:pure} is identical to the result in Proposition 1 of \textcite{gallice2019co}. While the introduction of simultaneous choices by group members does not impede the existence of a fully cooperative equilibrium in pure strategy, it does however make the existence condition more stringent, as the right-hand side of \eqref{equ:pure} is increasing in $n$. This reflects the fact that, on top of potentially free-riding on subsequent groups, a player's unilateral deviation from the equilibrium strategy allows her to free-ride on her other team members with certainty.

We now assume that groups only observe the contribution of their immediate predecessor, $m=1<b$. This case should intuitively be the most favorable for the existence of a cooperative equilibrium: the right-hand side of \eqref{equ:pure} is increasing in $m$, making the existence condition more stringent when the size of the sample increases. When $n=1$, however, \textcite{gallice2019co} prove that a pure strategy equilibrium exists for all $r\in [2,3-\frac{3}{N+1}]$, but not for higher values. The reason is that, when $m=1$ and groups are of size one, a player has the power to restore a full history of cooperation after observing a defection, simply by contributing. If $r$ is large enough, the pure strategy equilibrium therefore stops to exist anymore, as there is no credible punishment for defectors.

In contrast, we show that when groups are larger, $n>1$, a pure strategy equilibrium exists for all values of $r\geq \frac{2N}{N-2(n-1)}$. The main difference with $n=1$ is that there is no way to unilaterally restore an entire history of contribution and make subsequent players contribute, even when it would yield higher expected surplus than defecting. Just like in the case with $m>1$, the punishment of defectors in a pure strategy equilibrium is thus credible. We then show that - concurrent to the pure strategy equilibrium - for large enough values of $r$ there exist at least two mixed strategy equilibria where players forgive defections with probability $\gamma\in (0,1)$. 

\begin{proposition}  
[Pure and Mixed Strategies with $m=1<n$]\label{thm:pureStrat} 
For any value of $r\geq \frac{2N}{N-2(n-1)}$ and given the profile of play
\begin{align*}
    \sigma_i^*(C\mid \zeta) =
\begin{cases}
1, & \zeta \in\{(0,0), (1,n)\}\\
0, & \text{otherwise}
\end{cases}
\end{align*}
for all $i\in I$, the assessment $(\sigma^*, \mu^*)$ is a sequential equilibrium.

Moreover, there exists a $r^\sharp<N$ so that for all $r>r^\sharp$ there exist two distinct values $\gamma^1_r,\gamma^2_r \in(0,1)$ where, for all $i\in I$, the profiles of play

\begin{align*}
    \sigma_{i,1}^*(C\mid \zeta) =
\begin{cases}
1, & \zeta \in\{(0,0), (1,n)\}\\
\gamma^1_r, & \text{otherwise},
\end{cases}
\end{align*}

\begin{align*}
    \sigma_{i,2}^*(C\mid \zeta) =
\begin{cases}
1, & \zeta \in\{(0,0), (1,n)\}\\
\gamma^2_r, & \text{otherwise}
\end{cases}
\end{align*}
 establish two distinct sequential equilibria $(\sigma^*_1, \mu^*_1)$ and $(\sigma^*_1, \mu^*_2)$, respectively.
 \end{proposition}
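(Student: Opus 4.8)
I would first fix the beliefs $\mu^*$. A player seeing $(0,0)$ is certain she is in Group~$1$, since only Group~$1$ ever observes $\zeta'=0$. A player seeing $(1,n)$ assigns probability $\tfrac1{b-1}$ to each position $t\in\{2,\dots,b\}$: under the perturbations $\sigma_i^k$ the sample $(1,n)$ is reached at position $t$ with probability tending to one for every $t\ge2$, so a uniform prior over positions yields the uniform posterior on $\{2,\dots,b\}$. Off path, at $(1,j)$ with $j<n$, $\mu^*$ is the limit of the $\mu_i^k$, whose precise shape turns out to be irrelevant. I would then verify sequential rationality case by case. At $(1,j)$ with $j<n$ every one of a player's $n-1$ groupmates receives the same sample and defects, so her group total is at most $1<n$ whatever she does; hence every subsequent group observes a sample outside $\{(0,0),(1,n)\}$ and defects, the continuation is independent of her choice, and $r<N$ makes defecting strictly optimal — under any belief. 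At $(0,0)$ she compares $r-1$ (full cooperation if she contributes) with $\tfrac rN(n-1)$ (if she defects her $n-1$ groupmates still contribute but the cascade wipes out everything after), yielding the slack condition $r\ge\tfrac N{N-n+1}$. At $(1,n)$ contributing gives $r-1$ irrespective of position, while defecting at position $t$ leaves Groups $1,\dots,t-1$ and her $n-1$ groupmates contributing and all later groups defecting, for $\tfrac rN(tn-1)$; averaging this over $t\in\{2,\dots,b\}$ and using $nb=N$ gives expected deviation payoff $\tfrac{r(N+2(n-1))}{2N}$, and $r-1\ge\tfrac{r(N+2(n-1))}{2N}$ rearranges to precisely $r\ge\tfrac{2N}{N-2(n-1)}$, which dominates $\tfrac N{N-n+1}$. (This range is nonempty only for $b\ge3$, in line with the ``more than two groups'' hypothesis.)

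\textbf{The mixed-strategy equilibria: incentives.} The structural observation I would start from is that the continuation after \emph{any} off-path sample is independent of the exact number of contributors observed: once a group sees $(1,j)$ with $j<n$, all its members contribute with probability $\gamma$, so the next group's total is $\mathrm{Bin}(n,\gamma)$ whatever $j$ was — with probability $q:=1-\gamma^n$ it is again below $n$ and the situation recurs one group on, otherwise it equals $n$ and cooperation is restored for good. Writing $D_t$ for the difference, as seen by a player in position $t$, between the expected number of later contributors under restored cooperation and that along the ``bad branch,'' one gets $D_t=n(1-\gamma)+q\,D_{t+1}$ with $D_b=0$, hence $D_t=\tfrac{n(1-\gamma)}{\gamma^n}\bigl(1-q^{\,b-t}\bigr)$. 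At an off-path sample, a player in position $t$ who contributes pushes her group to $n$ only if all $n-1$ groupmates also contribute (probability $\gamma^{n-1}$), so the gain from $C$ over $D$ is $\Delta_t=\tfrac rN-1+\tfrac rN\gamma^{n-1}D_t=\tfrac rN-1+\tfrac{rn(1-\gamma)}{N\gamma}\bigl(1-q^{\,b-t}\bigr)$, decreasing in $t$, positive for small $t$ when $r$ is large, and equal to $\tfrac rN-1<0$ at $t=b$.

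\textbf{The mixed-strategy equilibria: the indifference equation.} An off-path sample is reached only after a tremble, after which the forgiveness dynamics disperse the conditional position over the groups \emph{following} the tremble; since what matters for indifference at $(1,j)$ is only the scalar $\mathbb E_j[q^{\,b-t}]$ (and $j$ enters the induced law through a factor $\binom nj\gamma^j(1-\gamma)^{n-j}$ that drops out on normalization), I would choose the perturbing sequence so that these scalars coincide across all off-path samples, equal to some $\overline q(\gamma)$. Indifference everywhere off path then becomes the single equation $\phi(\gamma):=\tfrac{n(1-\gamma)}{\gamma}\bigl(1-\overline q(\gamma)\bigr)=\tfrac{N-r}{r}$. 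The map $\phi$ is continuous and strictly positive on $(0,1)$, with $\phi(1^-)=0$ (the factor $1-\gamma$) and $\phi(0^+)=0$ (expanding $1-q^{\,b-t}\sim(b-t)\gamma^n$ gives $\phi(\gamma)\sim n\gamma^{n-1}\,\mathbb E[b-t]\to0$ since $n\ge2$). Hence $\phi$ attains a positive maximum $\bar\phi$, and for $r>r^\sharp:=\tfrac N{1+\bar\phi}<N$ one has $0<\tfrac{N-r}{r}<\bar\phi$, so the intermediate value theorem yields one root of $\phi(\gamma)=\tfrac{N-r}{r}$ on each side of the maximizer — two distinct $\gamma^1_r<\gamma^2_r$ in $(0,1)$.

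\textbf{On-path incentives and the main obstacle.} For each $\gamma\in\{\gamma^1_r,\gamma^2_r\}$ I would finally check that contributing with probability one is still optimal at the on-path samples $(0,0)$ and $(1,n)$. There a unilateral deviator's group falls to $n-1$ \emph{with certainty} (her groupmates contribute for sure), so the gain from $C$ in position $t$ is $\tfrac rN-1+\tfrac rN D_t$ — as in $\Delta_t$ but \emph{without} the factor $\gamma^{n-1}<1$ — evaluated under the on-path belief (uniform on $\{2,\dots,b\}$ at $(1,n)$, degenerate at Group~$1$ at $(0,0)$); combining this with the identity $\gamma^{n-1}\,\mathbb E_j[D_t]=\tfrac{N-r}{r}$ (which, $D_t$ being affine in $q^{\,b-t}$, holds once the $\mathbb E_j[q^{\,b-t}]$ have been matched) and using the freedom in the perturbation to keep the off-path posteriors from concentrating on the earliest position, one gets strictly positive on-path gains; consistency of the whole assessment is then built into the construction. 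The hard part is precisely this construction of the off-path beliefs: a single defection, once forgiven with probability $\gamma$, triggers a stochastic cascade that can leave a later group at \emph{any} of the samples $(1,0),\dots,(1,n-1)$, and the trembling explanations genuinely differ across them (only $(1,n-1)$ carries the atom ``exactly one defector in the immediately preceding group''), so one must exhibit a single perturbation sequence whose limiting beliefs make the \emph{same} forgiveness probability $\gamma$ sequentially rational at all of these information sets at once while still leaving full cooperation optimal on the equilibrium path; this is where the bulk of the work lies.
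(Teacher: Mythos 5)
Your proof follows essentially the same route as the paper's: your $D_t$ and $\Delta_t$ are exactly the paper's $\phi_t(\gamma,\overline{\zeta})-1$ and $\frac{r}{N}\phi_t(\gamma,\overline{\zeta}')-1$ from Lemma~\ref{lem:psiPhi}, your indifference equation with $\overline{q}(\gamma)$ is the paper's $\Delta(\gamma)=0$ with positional likelihood $\psi_t(\gamma)$, the two-root argument from $\phi(0^+)=\phi(1^-)=0$ plus a positive interior maximum is the paper's Rolle/IVT step with the same threshold $r^\sharp=\frac{N}{1+\bar\phi}$, and your on-path verification is the content of Lemma~\ref{lem:inequalities}. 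The ``main obstacle'' you flag at the end --- that the samples $(1,0),\dots,(1,n-1)$ might force different limiting posteriors over positions --- is not treated as an obstacle in the paper at all: it derives a single $\psi_t(\gamma)\propto\sum_{j=0}^{t-2}(1-\gamma^n)^j$ for every sample containing a defection and notes that the incentive calculation depends on the sample only through whether $\zeta''<n$, so on this point you are being more demanding than the published argument rather than missing a step it supplies.
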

 
The pure strategy equilibrium follows the same logic as Proposition~\ref{lem:m>1Symmetric}. For a mixed strategy that forgives defections with strictly positive probability to exist out-of-equilibrium, all players must be indifferent between contributing or not, when observing a sample with at least one defection. They must therefore balance the future contributions they expect if they contribute, with what they can expect by defecting, given that everyone else forgives defections with probability $\gamma$. 

We illustrate this result in Figure~\ref{fig:1}. The function $\Delta(\gamma)$ describes the difference in utility between contributing and defecting, upon witnessing a defection. A first observation is that whenever groups are at least of size two (solid line in Figure~\ref{fig:1} for $n=5$), $\Delta(0)=\Delta(1)=\frac{r}{N} -1<0$. If a player expects that no one ever forgives defections ($\gamma=0$), there is no way for her to restore a full history of cooperation after observing a defection, as other members of the group will never contribute. If she expects everyone to forgive defections all the time ($\gamma=1$), then there is no benefit from ever cooperating, as it will have no influence on other players' behavior. It follows, by Rolle's Theorem, that $\Delta(\gamma)$ must attain at least one local maximum between $(0,1)$: for smaller values of $\gamma$, a higher probability of forgiveness makes it more profitable to also forgive and restore cooperation, but at some point this effect is compensated by the gains from defecting in the hope the next group will nonetheless contribute. In the proof of Proposition~\ref{thm:pureStrat}, we show that for at least one $r^\sharp<N$ - and thus all other $r>r^\sharp$ -  $\Delta(\gamma)$'s local maximum is positive. In which case, $\Delta(\gamma)$ exhibits at least two roots and, thus, two distinct sequential equilibria. 

\begin{figure}[htbp]
    \centering
    \includegraphics[scale=0.5]{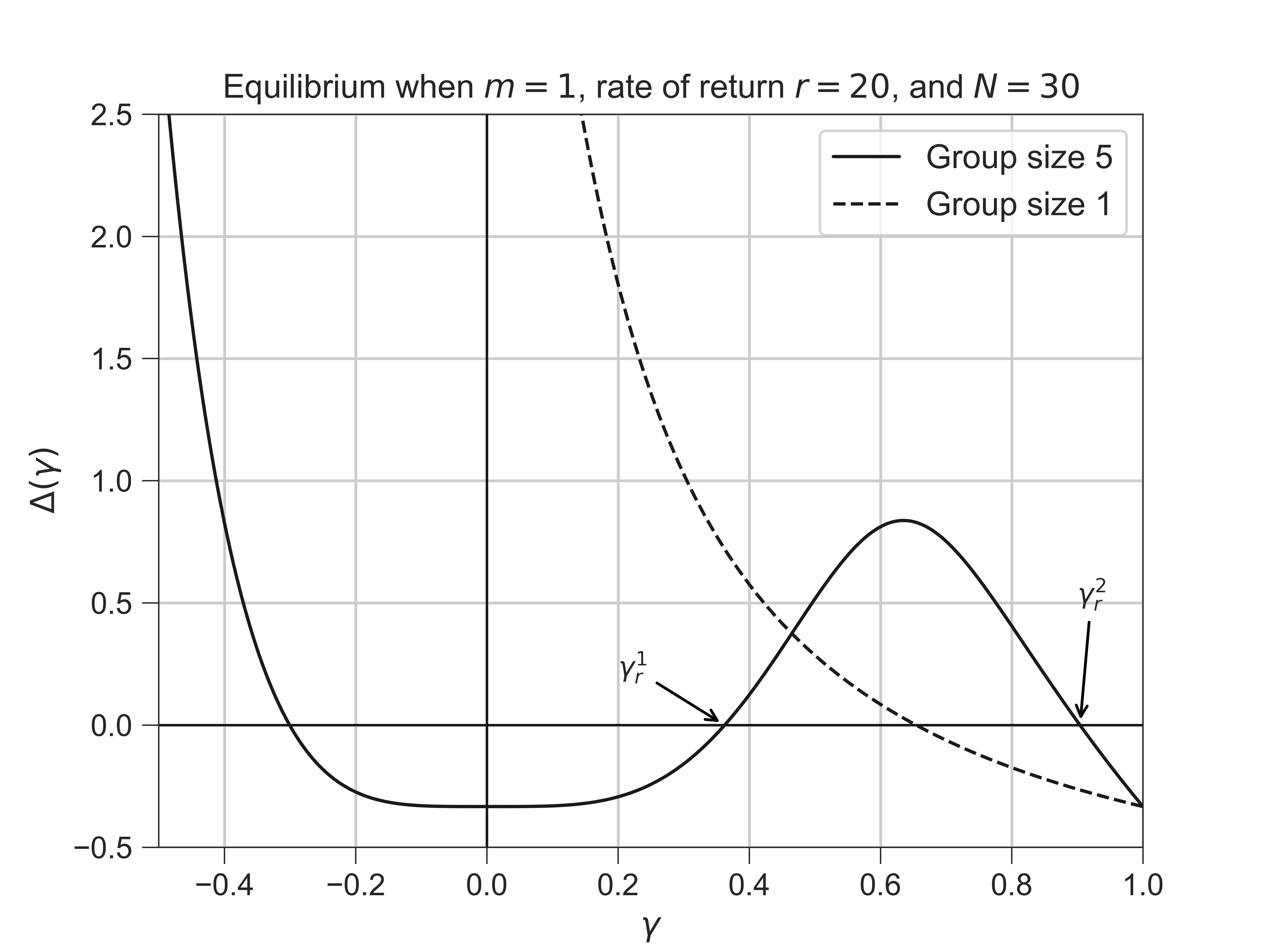}
    \caption{Mixed Strategy Equilibrium}
    \label{fig:1}
\end{figure}

The two intersections between the solid line and the horizontal axis $\Delta(\gamma)=0$ correspond to mixed strategy equilibria of different nature. Assuming that players defect by mistake with a very small probability $\epsilon>0$, there is a self-enforcing coalition \parencite{bernheim1987coalition} of all players who would be better off forgiving with probability $\gamma^2_r$ than with $\gamma^1_r$ or $0$. As group size becomes larger, the two equilibrium points move to the right. This result directly follows from the equilibrium conditions: for a given $\gamma$, the probability for an entire group to contribute after a defection is decreasing with the number of group members. Hence, to make players indifferent between contributing or not, the probability of an individual player forgiving must be higher in larger groups.

 The dashed line represents the case with groups of size one studied by \textcite{gallice2019co}. For the lowest values of $\gamma$, $\Delta(\gamma)$ is high: it pays to forgive a unilateral defection if you expect no one else to do it, because you know it is the only way to restore future cooperation. Then, as $\gamma$ increases, the gain from forgiving decreases, up to the point where all players are indifferent between contributing or not, $\Delta(\gamma)=0$.

\section{Conclusion} \label{sec:conclu}

This paper shows that full contribution to a public good within groups of self-interested players is achievable in a finite horizon. The main mechanism behind the paper is that groups of players believe that providing evidence of their full contribution may help foster future contributions by other groups. In that context, every player within a group is pivotal, so that a standard game of simultaneous contributions becomes a threshold public goods game.

We believe that crowd-funding operations to finance public goods could be inspired by our results. A fundraiser could decide to split their pool of potential donors into several groups. They could then contact groups sequentially and inform each participant of a group-based crowdfunding objective. Subsequent groups (if there are any) would then be informed of whether their immediate predecessors achieved their goal, perhaps with information on the fact that a group's contribution may induce further groups to also contribute. If there is some position uncertainty so that no group is aware of being the last one, our model predict that large contributions are feasible. The same logic could be applied in the context of sequential costly voting.

\section*{Conflict of Interest}
None. 
\section*{Declaration of Generative AI and AI-assisted technologies in the writing process} 
None. 
\section*{Acknowledgments}
We wish to thank Trivikram Dokka, Konstantinos Georgalos, Alexander Matros and Jaideep Roy, as well as participants of the SING17 conference and two anonymous referees for their helpful comments.
\newpage
\printbibliography

\newpage
\appendix
\numberwithin{equation}{section} 
\renewcommand{\theequation}{\Alph{section}.\arabic{equation}} 

\renewcommand{\thesection}{Appendix \Alph{section}:} 
\renewcommand{\thesubsection}{\Alph{section}.\arabic{subsection}:} 

\section{Proofs}
\subsection{Proof of Proposition~\ref{lem:m>1Symmetric}}

\begin{proof}
In what follows we let $\sigma_i^k$ denote a sequence of strategies with $\sigma_i^k(C\mid \zeta)= 1 - (s_k)$ and $\sigma_i^k(D\mid \zeta) = (s_k)$, where $(s_k)$ is any non-trivial real null sequence, and put $\mu_i^k$ as the induced belief for strategy $\sigma_i^k$ for each $k\in \N$.

Consider a player in Group $t$ and a history $\zeta = (m,c)$, with $c$ as the number of observed contributions in $m$ sampled groups. Assume the sample contains at least one defection so that $c<mn$. For groups of size at least $2$, a player that observes $\zeta$ is aware that every other player in her group also witnesses a defection and given the pure profile of play, the effect of the defection would extend beyond her group regardless of her contribution, or lack thereof. Even when $n=1$ players in subsequent groups will still witness defections as $m>1$ and $c+1 + kn = mn$ for some $k\in \N$. This means that if the defection occurs in Group $t'<t$ and all other players in groups succeeding $t'$ must defect, then a defection must inevitably appear in Group $t-1$. It follows that defection after defection is optimal given any value of $r$.

In contrast, if $\zeta = (\zeta',\zeta'n)$, meaning all players in the observed sample have contributed, a player in Group $t>m$ prefers to contribute when 
\begin{align} \label{equ:pref}
\frac{r}{N}N -1 \geq \frac{r}{N}\frac{N+(m+1)n-2}{2},
\end{align}
since she expects her position to be in the mid-point between $m+1$ and $b$, and expects all other members of her group to contribute. This inequality simplifies to the condition in \eqref{equ:pure}. Any other player in a Group $t<m$ knows for sure they are at the beginning of the sequence. They therefore have an even greater incentive to contribute as their gains from contributing and encouraging future contributions are larger than those in a Group $t'>m$. 
\end{proof}

\subsection{Proof of Proposition~\ref{thm:pureStrat}}

\begin{proof} Let's assume that we have $b = \frac{N}{n}$ many groups with $n$ individuals per group. Assume that the standard strategy given profile $\zeta$ is
\begin{align*}
    \sigma^*(C\mid \zeta) =
\begin{cases}
1, & \zeta \in \{(0,0), (1,n)\}\\
\gamma, & \text{otherwise},
\end{cases}
\end{align*}
where $\gamma \in[0,1)$ is the - off the equilibrium path - probability with which a player contributes after observing at least one defection in the observed sample. Fix a profile $\overline{\zeta}$ and set for Player $j$ the strategies
\[
\sigma_j^C(\zeta)
=
\begin{cases}
\sigma^*_j(\zeta), & \zeta \not = \overline{\zeta}\\
1, & \zeta = \overline{\zeta}.
\end{cases}
\]

\[
\sigma_j^D(\zeta)
=
\begin{cases}
\sigma^*_j(\zeta), & \zeta \not = \overline{\zeta}\\
0, & \zeta = \overline{\zeta},
\end{cases}
\]
and $\mu_j^D$ and $\mu_j^C$ as their corresponding beliefs. Set $\phi_t(\gamma)$ and $\psi_t(\gamma)$ as the number of additional contributions said player expects from contributing rather than defecting whilst in Group $t$, and the likelihood  of being in Group $t$ after observing a defection, respectively. For $n>1$, $\phi_t(\gamma)$ is different depending on the history $\overline{\zeta} = (1, n')$ observed by Player $j$. Indeed, if $\overline{\zeta} = (1, n')$ with $n>n'$ then all other member of Group $t$' will also observe a defection and act according to their strategies. Hence, there is a non-zero probability that subsequent groups will also witness a sample with defection even if Player $j$ itself contributes. In contrast, if $\overline{\zeta} = (1, n)$ then Player $j$'s defection will be the only one in her group and the effect of her defection should be larger than the previous scenario. The following lemma demonstrates just that.

\begin{lemma}\label{lem:psiPhi} Both $\phi_t(\gamma), \psi_t(\gamma): [0,1] \to \R$ are continuous functions where:
\begin{enumerate}
\item given $\overline{\zeta} = (1, n')$ with $n>n'$
\[ \phi_t(\gamma) = \frac{n(1-\gamma)(1-(1-\gamma^n)^{b-t})}{\gamma}+1,
\]
with $\phi_t(0) = 1$ for all $n>1$ and $\phi_t(0) = b-t+1$ for $n=1$; 
\item given $\overline{\zeta} = (1, n)$

\[ \phi_t(\gamma) = \frac{n(1-\gamma)(1-(1-\gamma^n)^{b-t})}{\gamma^n}+1,
\]
with $\phi_t(0) = (b-t)n +1$; and
\item  \[
\psi_t(\gamma) = \frac{(1-(1-\gamma^n)^{t-1})}{b-1-\gamma^{-n}(1-\gamma^n)(1-(1-\gamma^n)^{b-1})}
\]
with $\psi_t(0)  =\frac{2(t-1)}{b(b-1)}$.
\end{enumerate}
\end{lemma}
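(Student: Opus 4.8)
The plan is to realize both quantities as expectations under a simple absorbing stochastic process on the groups downstream of a defection, and then read off closed forms by solving first-order recursions. Throughout I work with the perturbed assessments $(\sigma_i^k,\mu_i^k)$ and pass to the limit $k\to\infty$: on the ``contribute for sure'' information sets $\{(0,0),(1,n)\}$ each player contributes with probability $1-s_k$, and off path each player contributes with probability $\gamma$. Because the on-path strategies are $1-s_k$, to leading order in $s_k$ there is at most one ``defection episode'': the first tremble occurs at a position $s$ that is asymptotically uniform on $\{1,\dots,b\}$ (each group is the first to tremble with probability $\approx n s_k$), and from group $s$ onward the history is governed by the off-path rule. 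I then encode the post-defection dynamics by a two-state chain: group $s+j$ either ``fully contributes'' (all $n$ members contribute, which happens with probability $\gamma^{n}$ when the previous group did not fully contribute) or not; once a group fully contributes, every later group sees $(1,n)$ and contributes fully for sure, so ``fully contributes'' is absorbing. Hence, conditional on the first defection at $s$, Group $t$ observes a defection iff $s\le t-1$ and the chain has not yet been absorbed by group $t-1$, an event of probability $(1-\gamma^n)^{t-1-s}$; an absorbed group contributes $n$, a not-yet-absorbed group contributes $n\gamma$ in expectation.

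For $\phi_t$, let $E_c(s)$ and $E_d(s)$ denote the expected total number of contributions by groups $s,\dots,b$ when group $s$ faces a clean, resp.\ a defective, sample, so $E_c(s)=n(b-s+1)$ and, for $s<b$, $E_d(s)=n\gamma+\gamma^n E_c(s+1)+(1-\gamma^n)E_d(s+1)$ with $E_d(b)=n\gamma$. Writing $\Delta(s)=E_c(s)-E_d(s)$ and substituting $E_c(s+1)=n(b-s)$ collapses this to $\Delta(s)=n(1-\gamma)+(1-\gamma^n)\Delta(s+1)$ with terminal value $\Delta(b)=n(1-\gamma)$, whose solution is the geometric sum $\Delta(s)=\frac{n(1-\gamma)\bigl(1-(1-\gamma^n)^{b-s+1}\bigr)}{\gamma^n}$. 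I then split on $\overline\zeta$. If $\overline\zeta=(1,n')$ with $n'<n$, all other members of Group $t$ also see a defection, so Player $j$'s switch from $D$ to $C$ raises Group $t$'s own total by $1$ and, additionally, with probability $\gamma^{n-1}$ (the event that the other $n-1$ members also contribute) turns Group $t+1$'s sample from defective to clean; this gives $\phi_t(\gamma)=1+\gamma^{n-1}\Delta(t+1)$, which simplifies to the stated $\frac{n(1-\gamma)(1-(1-\gamma^n)^{b-t})}{\gamma}+1$. If $\overline\zeta=(1,n)$, the other members contribute for sure, so the switch changes $g_t$ deterministically between $n$ and $n-1$, i.e.\ Group $t+1$'s sample deterministically between clean and defective, giving $\phi_t(\gamma)=1+\Delta(t+1)=\frac{n(1-\gamma)(1-(1-\gamma^n)^{b-t})}{\gamma^n}+1$. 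The boundary values $\phi_t(0)$ follow from $(1-\gamma^n)^{b-t}=1-(b-t)\gamma^n+O(\gamma^{2n})$, which makes each expression a ratio with a removable zero at $\gamma=0$.

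For $\psi_t$, Bayes' rule with the uniform prior $1/b$ over positions gives $\psi_t(\gamma)=\frac{P(\text{Group }t\text{ sees a defection})}{\sum_{t'=1}^{b}P(\text{Group }t'\text{ sees a defection})}$, and by the chain description $P(\text{Group }t\text{ sees a defection})$ is proportional to $\sum_{s=1}^{t-1}(1-\gamma^n)^{t-1-s}=\frac{1-(1-\gamma^n)^{t-1}}{\gamma^n}$, the common first-tremble constant cancelling. Summing over $t'=1,\dots,b$ and using $\sum_{k=0}^{b-1}(1-\gamma^n)^k=\frac{1-(1-\gamma^n)^b}{\gamma^n}$ gives the denominator $b-\frac{1-(1-\gamma^n)^b}{\gamma^n}$; a one-line algebraic identity (adding and subtracting $(1-\gamma^n)$ in the bracket) rewrites this as $b-1-\gamma^{-n}(1-\gamma^n)\bigl(1-(1-\gamma^n)^{b-1}\bigr)$, matching the statement. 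The limit $\psi_t(0)=\frac{2(t-1)}{b(b-1)}$ is obtained from $1-(1-\gamma^n)^{m}=m\gamma^n+O(\gamma^{2n})$ in numerator and denominator (the latter being $\sim\frac{b(b-1)}{2}\gamma^n$). Continuity on $[0,1]$ is then immediate: for $\gamma\in(0,1]$ every expression is a rational function of $\gamma$ whose denominator ($\gamma$, $\gamma^n$, or $\sum_{t'=1}^b(1-(1-\gamma^n)^{t'-1})$, a positive sum of non-negative terms) does not vanish, and the computed $\gamma\to0$ limits provide the continuous extension; the value at $\gamma=1$ is read off directly (with $0^0=1$). The delicate part is not the algebra but the probabilistic reduction in the first paragraph: verifying that, in the $k\to\infty$ limit of the perturbed assessments, the relevant beliefs and continuation payoffs are exactly those of the single-episode absorbing chain, so that all contingencies of higher order in $s_k$ (a second tremble, a tremble before position $s$, a tremble during the $\gamma$-phase) are negligible and the first-tremble position is uniform. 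Once that reduction is in place, the recursions and the geometric-sum identities finish the proof.
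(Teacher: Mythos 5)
Your proposal is correct and follows essentially the same route as the paper's proof: both reduce the post-defection dynamics to a two-state (clean/defective) absorbing description of groups, compute the group-by-group difference in expected contributions (your backward recursion for $\Delta(s)$ is just the paper's direct geometric sum of the per-group differences $n\gamma^{n-1}(1-\gamma)(1-\gamma^n)^{k-1}$ in closed form), and obtain $\psi_t$ by normalizing $\sum_{j=0}^{t-2}(1-\gamma^n)^j$ over positions. The tremble-based justification of the uniform first-defection position, which you flag as the delicate step, is likewise left implicit in the paper, so you are not missing anything the paper supplies.
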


\begin{proof}

\noindent
The number of additional contributors a player in Group $t$ should expect from contributing rather than defecting given history $\overline{\zeta}$ becomes 
\[
\phi_t(\gamma) = E_{\mu^C}(G_{-j}\mid \zeta = \overline{\zeta}, Q(j) = t) - E_{\mu^D}(G_{-j}\mid \zeta = \overline{\zeta}, Q(j) = t),
\]
where
\begin{align*}
E_{\mu^C}(G_{-j}\mid \zeta = \overline{\zeta}, Q(j) = t)  &= \sum_{i=1}^{t-1}E_{\mu^*}(G_{i}\mid \zeta = \overline{\zeta}, Q(j) = t)\\
& +  \sum_{i=t}^{b} E_{\mu^C}(G_{i}\mid \zeta = \overline{\zeta}, Q(j) = t),
\end{align*}
\begin{align*}
E_{\mu^D}(G_{-j}\mid \zeta = \overline{\zeta}, Q(j) = t)  &= \sum_{i=1}^{t-1}E_{\mu^*}(G_{i}\mid \zeta = \overline{\zeta}, Q(j) = t)\\
& +  \sum_{i=t}^{b} E_{\mu^D}(G_{i}\mid \zeta = \overline{\zeta}, Q(j) = t),
\end{align*}
and $G_i$ represents the $i^\text{th}$ group.
Hence,
\[
\phi_t(\gamma) = \sum_{i=t}^{b} E_{\mu^C}(G_{i}\mid \zeta = \overline{\zeta}, Q(j) = t) -   E_{\mu^D}(G_{i}\mid \zeta = \overline{\zeta}, Q(j) = t).
\]
\noindent
1. If the sample $\overline{\zeta} = (1,n')$ contains a defection (i.e., $n>n'$) we have
\[
E_{\mu^D}(G_{t}\mid \zeta = \overline{\zeta}, Q(j) = t) =   (n-1)\gamma 
\]
and
\[
E_{\mu^C}(G_{i}\mid \zeta = \overline{\zeta}, Q(j) = t) =  E_{\mu^D}(G_{i}\mid \zeta = \overline{\zeta}, Q(j) = t) + 1.
\]
\noindent
For $t+1$ we get 
\[
E_{\mu^D}(G_{t+1}\mid \zeta = \overline{\zeta}, Q(j) = t)= \gamma n 
\]
\text{ and } 
\begin{align*}
E_{\mu^C}(G_{t+1}\mid \zeta = \overline{\zeta}, Q(j) = t) &= \gamma^{n-1}n + (1-\gamma^{n-1})E_{\mu^D}(G_{t+1}\mid \zeta = \overline{\zeta}, Q(j) = t)\\
&= \gamma^{n-1}n + (1-\gamma^{n-1})\gamma n.
\end{align*}
In general, for $t+k$ with $k\geq 1$ we have
\[
E_{\mu^D}(G_{t+k}\mid \zeta = \overline{\zeta}, Q(j) = t) = n\left[1-(1-\gamma)(1-\gamma^n)^{k-1}\right] 
\]
and 

\[
E_{\mu^C}(G_{t+k}\mid \zeta = \overline{\zeta}, Q(j) = t) = n\gamma^{n-1}+ (1-\gamma^{n-1})E_{\mu^D}(G_{t+k}\mid \zeta = \overline{\zeta}, Q(j) = t).
\]
In turn, 
\[
E_{\mu^C}(G_{t+k}\mid \zeta = \overline{\zeta}, Q(j) = t) - E_{\mu^D}(G_{t+k}\mid \zeta = \overline{\zeta}, Q(j) = t) = n\gamma^{n-1}(1-\gamma)(1-\gamma^n)^{k-1}.
\]
As a sum of powers of $(1-\gamma^n)^{k-1}$, we deduce that for any $\gamma \in (0,1]$:
\[
\phi_t(\gamma) = \frac{n(1-\gamma)(1-(1-\gamma^n)^{b-t})}{\gamma}+1,
\]
with
\[
\lim_{\gamma\to 0} \phi_t(\gamma) = \lim_{\gamma\to 1} \phi_t(\gamma) = 1,
\]
for all $n>1$. We observe that $\phi_t(\gamma)>1$ for all $
\gamma \in (0,1)$. In fact, the distribution $\phi_t(\gamma)$ is bell shaped. This means that there is an optimal value of $\gamma$ that maximizes the additional contributions a player can expect by contributing rather than defecting. If we let $n=1$ we get 
\[
\phi_t(\gamma) =\frac{1-(1-\gamma)^{b-t+1}}{\gamma}
\]
with 
\[
\lim_{\gamma\to 0} \phi_t(\gamma) = b-t+1.
\]
This is precisely what is obtained in \textcite{gallice2019co}.\\

2. If the sample $\overline{\zeta} = (1,n)$, then the computation is similar but much simpler. For $t+k$ with $k\geq 1$ we have
\[
E_{\mu^D}(G_{t+k}\mid \zeta = \overline{\zeta}, Q(j) = t) = n\left[1-(1-\gamma^n)^{k-1}(1-\gamma)\right] 
\]
and 

\[
E_{\mu^C}(G_{t+k}\mid \zeta = \overline{\zeta}, Q(j) = t) = n. 
\]
In turn, 
\[
E_{\mu^C}(G_{t+k}\mid \zeta = \overline{\zeta}, Q(j) = t) - E_{\mu^D}(G_{t+k}\mid \zeta = \overline{\zeta}, Q(j) = t) = n(1-\gamma)(1-\gamma^n)^{k-1}.
\]
Therefore, for any $\gamma \in (0,1]$ we have
\[
\phi_t(\gamma) = \frac{n(1-\gamma)(1-(1-\gamma^n)^{b-t})}{\gamma^n}+1,
\]
with
\[
\lim_{\gamma\to 0} \phi_t(\gamma) = (b-t)n+1.
\]

\noindent

3. Let $\psi_t(\gamma)$ denote the likelihood that a player finds itself in position $t$ after witnessing a defection. It follows that
\begin{align*}
\psi_t(\gamma) = \frac{\sum_{j=0}^{t-2}(1-\gamma^n)^j}{\sum_{k=2}^b\sum_{i=0}^{k-2}(1-\gamma^n)^{i}} &= \frac{ \gamma^{-n}(1-(1-\gamma^n)^{t-1})}{\gamma^{-n}\left(b-1+\gamma^{-n}(1-\gamma^n)(1-(1-\gamma^n)^{b-1}\right)}\\
&= \frac{(1-(1-\gamma^n)^{t-1})}{b-1-\gamma^{-n}(1-\gamma^n)(1-(1-\gamma^n)^{b-1})}.
\end{align*}

For all other $n>1$ we can make the replacement $y = \gamma^p$ to obtain, after applying L'Hospital's Rule, $\psi_t(0) =\frac{2(t-1)}{b(b-1)}$.\\
\end{proof}

\noindent

In what follows it becomes useful to let $\phi_t(\gamma, \zeta)$ denote $\phi_t(\gamma)$ given a profile $\zeta$. A player in Group $1$ contributes whenever $\frac{r}{N}\phi_1(\gamma, (0,0)) - 1 \geq 0$; a player who received a sample $\overline{\zeta} = (1,n)$ contributes provided $\sum_{t=2}^b\frac{1}{b-1}\phi_t(\gamma, \overline{\zeta})-1\geq 0$; and given profile $\overline{\zeta}' = (1,n')$ with $n'<n$ a player contributes provided $\sum_{t=2}^b \psi_t(\gamma)\phi_t(\gamma, \overline{\zeta}')-1\geq 0$.

\begin{lemma}\label{lem:inequalities} Given profiles $\overline{\zeta}' = (1,n')$ with $n'<n$ and $\overline{\zeta} = (1,n)$ it follows that for all $\gamma \in [0,1]$

\[
\phi_1(\gamma) > \sum_{t=2}^b\frac{1}{b-1}\phi_t(\gamma, \overline{\zeta}) \geq \sum_{t=2}^b \psi_t(\gamma)\phi_t(\gamma,\overline{\zeta}').
\]
\end{lemma}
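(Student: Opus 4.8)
The plan is to prove the two inequalities separately, reducing each to the closed-form expressions of Lemma~\ref{lem:psiPhi} together with three elementary observations: (i) for fixed $\gamma$, both $t\mapsto\phi_t(\gamma,\overline\zeta)$ and $t\mapsto\phi_t(\gamma,\overline\zeta')$ are nonincreasing on $\{1,\dots,b\}$, since $b-t$ decreases and $1-\gamma^n\in[0,1]$ so $1-(1-\gamma^n)^{b-t}$ decreases; (ii) $t\mapsto\psi_t(\gamma)$ is nondecreasing, since its numerator $1-(1-\gamma^n)^{t-1}$ is; and (iii) $\sum_{t=2}^b\psi_t(\gamma)=1$, which follows by summing the geometric series $\sum_{t=2}^b(1-\gamma^n)^{t-1}$ and recognizing that $(b-1)$ minus that sum is exactly the denominator of $\psi_t(\gamma)$ --- an identity that also exhibits this denominator as a sum of $b-1$ terms each in $(0,1]$, hence positive, so that $\psi_t(\gamma)$ is a genuine distribution on $\{2,\dots,b\}$.

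The first step is to identify $\phi_1(\gamma)=\phi_1(\gamma,(0,0))$. Repeating the accounting in the proof of Lemma~\ref{lem:psiPhi} for a deviator in Group~$1$, the only change is that her teammates observe $(0,0)$ and thus contribute with probability one regardless of her action, so her group always delivers its $n-1$ teammate contributions, while a unilateral defection makes Group~$2$ observe a defective sample and triggers exactly the chain $E_{\mu^D}(G_{1+k})=n\bigl(1-(1-\gamma)(1-\gamma^n)^{k-1}\bigr)$ already computed there. Summing over $k=1,\dots,b-1$ and adding her own unit contribution gives
\[
\phi_1(\gamma)=1+\frac{n(1-\gamma)\bigl(1-(1-\gamma^n)^{b-1}\bigr)}{\gamma^n},
\]
i.e.\ exactly the formula for $\phi_t(\gamma,\overline\zeta)$ evaluated at $t=1$ (with $\phi_1(0)=(b-1)n+1$ and $\phi_1(1)=1$). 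By observation (i) applied at $t=1$, $\phi_1(\gamma)=\phi_1(\gamma,\overline\zeta)>\phi_t(\gamma,\overline\zeta)$ for every $t\in\{2,\dots,b\}$ and every $\gamma\in(0,1)$, so $\phi_1(\gamma)>\frac1{b-1}\sum_{t=2}^b\phi_t(\gamma,\overline\zeta)$; at $\gamma=0$ one checks $(b-1)n+1>\frac{n(b-2)}{2}+1$ directly, which holds because $b\ge 2$.

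For the second inequality I would chain two bounds. First, termwise, $\phi_t(\gamma,\overline\zeta)\ge\phi_t(\gamma,\overline\zeta')$ for every $t$ and every $\gamma\in[0,1]$: the two expressions have the same nonnegative numerator $n(1-\gamma)\bigl(1-(1-\gamma^n)^{b-t}\bigr)$ and denominators $\gamma^n\le\gamma$, and at $\gamma=0$ the limits give $(b-t)n+1\ge 1$. Hence $\frac1{b-1}\sum_{t=2}^b\phi_t(\gamma,\overline\zeta)\ge\frac1{b-1}\sum_{t=2}^b\phi_t(\gamma,\overline\zeta')$. Second, the reweighting bound $\frac1{b-1}\sum_{t=2}^b\phi_t(\gamma,\overline\zeta')\ge\sum_{t=2}^b\psi_t(\gamma)\phi_t(\gamma,\overline\zeta')$: set $w_t=\psi_t(\gamma)-\frac1{b-1}$, so $\sum_{t=2}^b w_t=0$ by (iii) and $w_t$ is nondecreasing by (ii), hence $w_t\le 0$ for $t\le t^\ast$ and $w_t\ge 0$ for $t>t^\ast$ at some threshold $t^\ast$; since $\phi_t(\gamma,\overline\zeta')$ is nonincreasing, $w_t\bigl(\phi_t(\gamma,\overline\zeta')-\phi_{t^\ast}(\gamma,\overline\zeta')\bigr)\le 0$ for all $t$, and summing (dropping the constant via $\sum w_t=0$) yields $\sum_{t=2}^b w_t\phi_t(\gamma,\overline\zeta')\le 0$, which is the claim.

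I expect the reweighting bound to be the only non-routine step: it is not a pointwise inequality, and one really needs the single-crossing / Chebyshev-sum structure coming from the opposite monotonicities of $\psi_t(\gamma)$ and $\phi_t(\gamma,\overline\zeta')$ together with $\sum_{t=2}^b\psi_t(\gamma)=1$; the bookkeeping for that normalization and for the positivity of the denominator of $\psi_t(\gamma)$ must be done with care. A minor annoyance is that $\gamma$ and $\gamma^n$ appear in denominators, so the endpoints $\gamma\in\{0,1\}$ are treated through the limiting values recorded in Lemma~\ref{lem:psiPhi}, after which continuity on $[0,1]$ (also established there) extends the weak inequalities to the closed interval; note that the first inequality is strict only on $[0,1)$ and degenerates to the equality $1=1$ at $\gamma=1$, a range of no interest for the equilibrium analysis anyway.
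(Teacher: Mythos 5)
Your proposal is correct and follows essentially the same route as the paper: a termwise comparison $\phi_t(\gamma,\overline{\zeta}')\le\phi_t(\gamma,\overline{\zeta})$ to reduce the second inequality to a reweighting claim, which is then settled by the single-crossing of $\psi_t(\gamma)$ against the uniform weights $\tfrac{1}{b-1}$ combined with the monotonicity of $\phi_t$ in $t$. Your write-up is in fact more explicit than the paper's (which dismisses the first inequality as ``obvious'' and leaves the final summation step terse), but no new idea is involved.
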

\begin{proof} The first inequality is obvious. For the second one observe that by Lemma~\ref{lem:psiPhi}, $\phi_t(\gamma,\overline{\zeta}') \leq \phi_t(\gamma,\overline{\zeta})$ for all $t$. In turn, showing that 
\[
\sum_{t=2}^b\frac{1}{b-1}\phi_t(\gamma, \overline{\zeta}') \geq \sum_{t=2}^b \psi_t(\gamma)\phi_t(\gamma,\overline{\zeta}')
\]
suffices. Fix a $\gamma \in [0,1]$ and observe that since
\[
1= \sum_{t=2}^b\frac{1}{b-1} = \sum_{t=2}^b\psi_t(\gamma)
\]
and $\psi_2(\gamma) < \frac{1}{b-1}$, there must exist $t^*\leq b$ with $\psi_{t}(\gamma) > \frac{1}{b-1}$ for all $t>t^*$. Consequently, 
\[
 \sum_{t=2}^{t^*}\frac{1}{b-1} \geq \sum_{t=2}^{t^*}\psi_t(\gamma) \text{ and }  \sum_{t=t^*+1}^{b}\frac{1}{b-1} \leq \sum_{t=t^*+1}^{b}\psi_t(\gamma).
\]
Since $\phi_t(\gamma)$ is decreasing in $t$ the claim follows.
\end{proof}

\noindent
Next, setting \[
\Delta(\gamma) = \frac{r}{N}\sum_{t=2}^b\psi_t(\gamma)\phi_t(\gamma) - 1
\]
we find that 
\[
\Delta(0) =\frac{r}{N} \sum_{t=2}^n\psi_t(0)\phi_t(0) - 1 = \frac{r}{N} - 1 <0
\]
for all $n>1$.  Thus, a pure contribution strategy exists for all values of $r$ (as we have assumed $r<N$). \\

In terms of mixed strategies, there are plenty of values of $r$ that yield a $\gamma$ with $\Delta(\gamma) = 0$. Since $n>1$ observe that $\Delta(\gamma) > \Delta(1) = \Delta(0) = \frac{r}{N} -1$ for all $\gamma \in (0,1)$. In turn, by Rolle's Theorem there exists at least one local maximum for $\Delta(\gamma)$ in $(0,1)$. Set $\gamma_0$ to denote this maximum and consider $\Delta(\gamma)$ as a function on $r$ and $\gamma$, $\Delta(r,\gamma)$. Observe that since $r$ is a constant in $\Delta(r, \gamma)$ then $\gamma_0$ is a local maxima for all $r$. Since for all $\gamma \in (0,1) $ we have $\Delta(N,\gamma)>\Delta(N,0) = 0$ and $\Delta(0,\gamma^\sharp)<0$ the continuity of $\Delta(r,\gamma)$ on $r$ implies that there exists a unique value $r^\sharp$ with $\Delta(r^\sharp, \gamma_0) = 0$. Moreover, for all $r>r^\sharp$ we get $\Delta(r,\gamma_0) > 0>\Delta(r,1) = \Delta(r,0) $ and, thus, two roots must exist. 
\end{proof}

\end{document}